\newtheorem{theorem}{Theorem}
\newtheorem{remark}{Remark}
\newtheorem{corollary}{Corollary}
\newcommand{\ave}[1]{\mathbb{E}( #1)}
\begin{document}

\begin{frontmatter}

\begin{fmbox}
\dochead{Research}


\title{A monotonic relationship between the variability of the infectious period and final size in pairwise epidemic modelling}


\author[
   addressref={aff1},                   
   email={rost@math.u-szeged.hu}   
]{\inits{JE}\fnm{Zsolt} \snm{Vizi}}
\author[
   addressref={aff2},
   email={rost@math.u-szeged.hu}
]{\inits{JRS}\fnm{István Z.} \snm{Kiss}}
\author[
addressref={aff3},
email={rost@math.u-szeged.huk}
]{\inits{JRS}\fnm{Joel C.} \snm{Miller}}
\author[
addressref={aff1,aff4},
corref={aff1}, 
email={rost@math.u-szeged.hu}
]{\inits{JRS}\fnm{Gergely} \snm{R\"ost}}


\address[id=aff1]{
  \orgname{Bolyai Institute, University of Szeged}, 
  \street{Aradi v\'ertan\'uk tere 1}                     %
  \postcode{H-6720}                                
  \city{Szeged},                              
  \cny{Hungary}                                    
}
\address[id=aff2]{%
  \orgname{University of Sussex},
  \street{School of Mathematical and Physical Sciences, Department of Mathematics, University of Sussex,},
  \postcode{BN1 9QH}
  \city{Falmer, Brighton},
  \cny{UK}
}
\address[id=aff3]{%
	\orgname{Institute for Disease Modeling},
	\street{}
	\postcode{}
	\city{Bellevue, Washington},
	\cny{USA}
}
\address[id=aff4]{%
	\orgname{Mathematical Institute, University of Oxford},
	\street{Woodstock Road},
	\postcode{OX2 6GG}
	\city{Oxford},
	\cny{United Kingdom}
}


\begin{artnotes}
\end{artnotes}

\end{fmbox}


\begin{abstractbox}

\begin{abstract} 
	 For a recently derived pairwise model of network epidemics with non-Markovian recovery,  we prove that under some mild technical conditions on the distribution of the infectious periods,  smaller variance in the recovery time leads to higher reproduction number, and consequently to a larger epidemic outbreak, when the mean infectious period is fixed. We discuss how this result is related to various stochastic orderings of the distributions of infectious periods. The results are illustrated by a number of explicit stochastic simulations, suggesting that their validity goes beyond regular networks. 
	

\end{abstract}


\begin{keyword}
\kwd{epidemic}
\kwd{network}
\kwd{infectious period}
\kwd{reproduction number}
\kwd{pairwise model}
\kwd{non-Markovian}
\kwd{integro-differential equation}
\kwd{delay}
\kwd{final size}
\end{keyword}


\end{abstractbox}
%

\end{frontmatter}




\section*{Introduction}

\medskip

Networks provide a useful paradigm to incorporate contact patterns and various heterogeneities within a population \cite{pastor2014epidemic,newman2002spread,konyv}. The basic ingredients of such models are nodes and links, usually representing individuals and the contacts between them, but they may represent also groups of individuals (such as the population at some geographic location), and the connectedness of these groups (such as transportation routes \cite{dia,yuki}).
In simple disease outbreak models, the status of an individual can be susceptible ($S$), infected ($I$) or recovered ($R$).  A key parameter associated with most epidemic models is the basic reproduction number (denoted by $\mathcal{R}_0$), which denotes the expected number of secondary infections generated by a typical infected individual introduced into a fully susceptible population \cite{diekmann}. The reproduction number is also a threshold quantity: if $
\mathcal{R}_0<1$ the epidemic will die out, while if $\mathcal{R}_0>1$ the disease may spread. Another important measure of epidemic severity is the final epidemic size, which is the total 
number of individuals who become infected during the time course of the epidemic. These two quantities are often connected via the so-called final size relation.  In these simple models that assume a fully mixed population, the final fraction that is not infected $s_\infty$ solves the implicit relation
\[
s_\infty = S(0) e^{-\mathcal{R}_0 (1-s_\infty)} \, .
\]
If infected individuals transmit with constant rate $\beta$, then in this well-mixed model $\mathcal{R}_0 = \beta \ave{\mathcal{I}}$ where $\ave{\mathcal{I}}$ is the average infection duration, and so variance in the distribution of infection duration does not affect the final size~\cite{miller:final,ma}.

Modelling epidemics on networks however increases the complexity of the models since the underlying population structure means that individuals are not interchangeable.  Thus we must track which individuals are in each status rather than simply how many individuals are in each status.  For example, in the most fundamental case of Markovian transmission and recovery, both time to infection and the time spent as infected and infectious is taken from exponential distributions with appropriate rates. Even for the purely Markovian case we need to deal with a continuous time Markov chain with a discrete state space with $3^N$ elements, where three stands for the three possible states a node can be in ($S$, $I$ and $R$) and $N$ denotes the number of nodes in the network. Writing down evolution equations for the probability of the system being in any of these states is possible but impractical due to the high dimensionality of the system. Hence, in order to deal with this complexity one need to employ some `clever' averaging.

Probabilistic methods, such as branching processes can be used to deal with the early growth and the asymptotic behaviour \cite{Ball}, with percolation theory also leading to good analytical treatment for the early growth and final size \cite{Perc}.  For the later dynamics, we generally need to derive a mean-field model, e.g. a low dimensional system of ODEs.

 There are many well established ways to derive mean-field models. Perhaps the most compact method is the so called edge based compartmental model (EBCM) \cite{EBCM} which has been successfully used to capture SIR dynamics with arbitrary transmission and infection processes \cite{Neil} on configuration-like networks. The EBCM provides an excellent approximation of the exact stochastic network epidemic, which becomes exact in some appropriate limits and conditions on the underlying network \cite{Decreusefond,Janson}.

 Another powerful method to model epidemic spread on network is provided by the message passing approach \cite{MP} and this works for arbitrary transmission and recovery processes but at the expense of a system consisting of a large number of integro-differential equations. 

In addition, pairwise models have been successfully used to approximate stochastic epidemics on networks and represent a vast improvement on compartmental models. 
Pairwise models also have the advantage of being easy to understand and very intuitive when compared to the EBCM or the message passing model. 

All the above are able to capture the time evolution of the epidemic while also offering insights about the epidemic threshold and final size.  All these models have the same starting point and not surprisingly it can be shown that often these models are equivalent~\cite{MLA,Neil,konyv} and they simply represent different choices of how one averages and how the reduced state space is defined \cite{konyv}.  

While dealing with the complexity and the modelling of contact structures, the dynamics of the disease needs to be accounted for appropriately. It is well known that the duration of the infectiousness has a major impact on whether an outbreak happens and how many people it affects as being a key parameter in the basic reproduction number. To highlight a recent example, in the West-African ebola outbreak one crucial part of the intervention strategy was to reduce the length of the post-mortem infectious period \cite{barbarossa}. In this paper we bridge the gap by considering a model that can capture both the complexity of contact structure as well as the features of the disease itself. To do this we consider pairwise models with Markovian infection but arbitrary recovery process and we focus on the outbreak threshold derived from this model and its dependence on the choice of the recovery process. The paper is structured as follows. First, we introduce the pairwise model, the analytical final epidemic size relation followed by the newly introduced basic pairwise reproduction number $\mathcal{R}_0^p$. The main result of the paper is on the relation between the variance in the distribution of the recovery process and the basic pairwise reproduction number. This is followed by some discussion of our results with respect to the concept of stochastic ordering, and the possible extension of our results to heterogeneous networks. We conclude with extensive numerical results and a discussion of our findings.

\section*{Methods}

\medskip

Pairwise models are formulated in terms of the expected values for the number of susceptible ($[S]$), infected ($[I]$) and recovered ($[R]$) nodes, which depend on the expected values of ($SS$) pairs ($[SS]$) and ($SI$) pairs ($[SI]$).    
Introducing the usual notations
\begin{itemize}
	\item $[X](t)$ for the expected number of nodes in state $X$ at time $t$, 
	\item $[XY](t)$ for the expected number of links connecting a node in state $X$ to another in state $Y$, and 
	\item $[XYZ](t)$ for the expected number of triplets in state $X-Y-Z$, 
\end{itemize}
where, $X, Y, Z\in \{S, I, R\}$,
and by summing up all possible transitions, the pairwise model reads as
\begin{eqnarray}
	\dot{[S]}(t)&=&-\tau [SI](t),\nonumber \\
	\dot{[I]}(t)&=&\tau [SI](t)-\gamma [I](t), \nonumber \\
	\dot{[SS]}(t)&=&-2\tau [SSI](t), \label{Original_Pairwise}\\
	\dot{[SI]}(t)&=&\tau [SSI](t)-\tau [ISI](t)-\tau [SI](t)-\gamma [SI](t),\nonumber
\end{eqnarray}
where $\tau$ is the per contact infection rate and $\gamma$ is the recovery rate. 
Here $[S]+[I]+[R]=N$ is the total number of nodes in the network, and only those equations are listed which are necessary to derive a complete self-consistent system. The equations for links contain triplets, 
thus we have to break the dependence on higher order terms to obtain a closed system. The closure approximation formula 
$[XSY]=\frac{n-1}{n} \frac{[XS] [SY]}{[S]}$, where $n$ is the average number of links per node, leads to the self-consistent system \cite{keeling} 
\begin{eqnarray}
	\label{eq:pairmarkov}
	\dot{[S]}(t)&=&-\tau [SI](t),\nonumber \\
	\dot{[I]}(t)&=&\tau [SI](t)-\gamma [I](t),\nonumber \\
	\dot{[SS]}(t)&=&-2\tau \frac{n-1}{n} \frac{[SS](t)[SI](t)}{[S](t)}, \\
	\dot{[SI]}(t)&=&\tau \frac{n-1}{n} \left(\frac{[SS](t)[SI](t)}{[S](t)}-\frac{[SI](t)[SI](t)}{[S](t)}\right)-(\tau+\gamma)[SI](t)\nonumber.
\end{eqnarray}
Closing at the level of pairs with the approximation $[XY]=n[X]\frac{[Y]}{N}$, one obtains the so called mean-field model
(or compartmental model)
\begin{eqnarray}
	\dot{S}(t)&=&-\tau \frac{n}{N} S(t) I(t),\nonumber \\
	\dot{I}(t)&=&\tau \frac{n}{N} S(t) I(t)-\gamma I(t),
\end{eqnarray}
with basic reproduction number
\begin{equation}
	\mathcal{R}_0=\frac{n}{N}\tau \mathbb{E}(\mathcal{I}) S_{0},\label{eq:mfstandardR0}
\end{equation}
where, $\mathbb{E}(\mathcal{I})=1/\gamma$ is the expected infectious period. 
	The final size relation associated to the mean-field model is 
\begin{eqnarray}
\label{eq:finalsizemftheorem}
\ln\left(s_\infty\right)=\mathcal{R}_0\left(s_\infty-1\right),
\end{eqnarray}
where $S_0$ is the number of susceptible individuals at time $t=0$ and $s_\infty=S_{\infty}/S_0$, where $S({\infty})=S_{\infty}$.
There are many results for the Markovian pairwise models \cite{saldana,keeling,konyv}, for example, the final epidemic size is given by 
\begin{equation}
	\frac{s_\infty^{\frac{1}{n}}-1}{\frac{1}{n-1}}=\frac{n-1}{N}\frac{\tau}{\tau+\gamma}[S]_0\left(s_\infty^{\frac{n-1}{n}} -1 \right),
    \label{eq:homogeneous_final}
\end{equation}
where $[S]_0$ is the number of susceptible individuals at time $t=0$ and $s_\infty=[S]_{\infty}/[S]_0$, where $[S]({\infty})=[S]_{\infty}$.

\subsection*{Non-Markovian Recovery}

\medskip

The Markovianity of the recovery process is a strong simplifying assumption. For many epidemics, the infectious period has great importance and it is measured empirically.  Recently, pairwise approximations of the SIR dynamics  with non-Markovian recovery have been derived, see \cite{prl,wilkinson,biomat,proca}. In the special case of fixed recovery time $\sigma$, the mean-field model is given by
\begin{eqnarray}
	\label{eq:meanfield}
	S'(t)&=&-\tau \frac{n}{N} S(t)I(t), \nonumber \\
	I'(t)&=&\tau \frac{n}{N} S(t)I(t)-\tau \frac{n}{N} S(t-\sigma)I(t-\sigma),
\end{eqnarray}
while the pairwise model turned out to be \cite{prl}
\begin{eqnarray}
	\label{eq:closeq}
	\dot{[S]}(t)&=&-\tau [SI](t),\nonumber \\
	\dot{[SS]}(t)&=&-2\tau \frac{n-1}{n} \frac{[SS](t) [SI](t)}{[S](t)},\nonumber \\
	\dot{[I]}(t)&=& \tau [SI](t) - \tau [SI](t-\sigma), \nonumber \\
	\dot{[SI]}(t)&=& \tau \frac{n-1}{n}\frac{[SS](t)[SI](t)}{[S](t)}-\tau \frac{n-1}{n}\frac{[SI](t)[SI](t)}{[S](t)} -\tau [SI](t) \nonumber \\
	& &-\tau \frac{n-1}{n}\frac{[SS](t-\sigma)[SI](t-\sigma)}{[S](t-\sigma)} e^{-\int_{t-\sigma}^{t}\tau\frac{n-1}{n}\frac{[SI](u)}{[S](u)}+\tau du}.
\end{eqnarray}
Both systems are now delay differential equations rather than ordinary differential equations, as is the case for Markovian epidemics. In \cite{prl}, the following final epidemic size relation has been derived:
\begin{equation}
	\frac{s_\infty^{\frac{1}{n}}-1}{\frac{1}{n-1}}=\frac{n-1}{N}\left(1-e^{-\tau \sigma}\right)[S]_0\left(s_\infty^{\frac{n-1}{n}}-1\right). \label{finalsize}
\end{equation}
 Considering a general distribution for the recovery period, the pairwise model can be formulated as a system of integro-differential equations \cite{proca,wilkinson}, which is given by 
\begin{subequations}
	\label{eq:closeq2}
	\begin{align}
	\dot{[S]}(t)&=-\tau [SI](t) \label{eq:closedeqS}\\
	\dot{[SS]}(t)&=-2\tau \frac{n-1}{n} \frac{[SS](t) [SI](t)}{[S](t)} \label{eq:closedeqSS}\\
	\dot{[I]}(t)&=\tau [SI](t) - \int_{0}^{t} \tau [SI](t-a) f(a) da - \int_{t}^{\infty} \varphi(a-t) \frac{f_\mathcal{I}(a)}{\xi(a-t)} da\label{eq:closedeqI}\\
	\dot{[SI]}(t)&=\tau \frac{n-1}{n}\frac{[SS](t)[SI](t)}{[S](t)}-\tau \frac{n-1}{n}\frac{[SI](t)}{[S](t)}[SI](t)-\tau [SI](t)\nonumber\\
	&-\int_0^t \tau \frac{n-1}{n}\frac{[SS](t-a)[SI](t-a)}{[S](t-a)} e^{-\int_{t-a}^t \tau\frac{n-1}{n}\frac{[SI](s)}{[S](s)}+\tau ds} f_\mathcal{I}(a)da\nonumber \\
	&-\int_t^{\infty}\frac{n}{N} [S]_0 \varphi (a-t) e^{-\int_{0}^t 
		\tau\frac{n-1}{n}\frac{[SI](s)}{[S](s)} +\tau ds} \frac{f_\mathcal{I}(a)}{\xi(a-t)}da. \label{eq:closedeqSI}
	\end{align}
\end{subequations}

Above we assume that the infection process along $S$--$I$ links is Markovian with transmission rate $\tau>0$. The recovery part is considered to be non-Markovian given by a random variable $\mathcal{I}$, with a cumulative distribution function $F_\mathcal{I}(a)$ and probability density function $f_\mathcal{I}(a)$. We use the associated survival function $\xi(a)=1-F_\mathcal{I}(a)$ and hazard function $h(a)=-\frac{\xi'(a)}{\xi(a)}=\frac{f(a)}{\xi(a)}$. We note that $\varphi (a)$ is the initial condition which gives the age of infection of individuals at time $t=0$.

From Eq.~\eqref{eq:closeq2}, the associated mean-field model can be easily deduced by using the closure approximation formula for homogeneous networks (i.e. $n$-regular graphs)
\begin{equation}
\label{eq:closeformmf}
[XY](t)=\frac{n}{N}[X](t)[Y](t),
\end{equation}
thus the node-level system becomes
\begin{subequations}
	\label{eq:closmfeq}
	\begin{align}
	\dot{S}(t)&=-\tau \frac{n}{N}S(t)I(t) \label{eq:closedmfeqS}\\
	\dot{I}(t)&=\tau \frac{n}{N}S(t)I(t) - \int_{0}^{t} \tau \frac{n}{N}S(t-a)I(t-a) f_\mathcal{I}(a) da- \int_{t}^{\infty} \varphi(a-t) \frac{f_\mathcal{I}(a)}{\xi(a-t)} da.\label{eq:closedmfeqI}
	\end{align}
\end{subequations}

\section*{The Pairwise Reproduction Number and Infectious Times}

\medskip

In \cite{prl}, a newly introduced basic reproduction-like number is defined for fixed length infectious periods as 
\begin{equation}
\mathcal{R}_0^{p}:=\frac{n-1}{N}\left(1-e^{-\tau \sigma}\right)[S]_0,
\label{eq:fixed_time_R0}
\end{equation} 
which appears also in equation~\eqref{finalsize}. 
%
It has also been shown, that for arbitrary
infectious periods, the basic reproduction number of the pairwise model is
\begin{equation}
\mathcal{R}_0^p=\frac{n-1}{N}\left(1-\mathcal L[f_\mathcal{I}](\tau)\right)[S]_0,\label{eq:genR0p}
\end{equation}
where $\mathcal L[\cdot]$ is the Laplace transform and $f_\mathcal{I}$
is the probability density function of the recovery process given by
the random variable $\mathcal{I}$. 
%
%
Numerical tests and analytical results have both confirmed that, in general, the 
following implicit relation for the final epidemic size holds
\begin{align}
\frac{s_\infty^{\frac{1}{n}}-1}{\frac{1}{n-1}}&=\mathcal{R}^p_0\left(s_\infty^{\frac{n-1}{n}}-1\right) =\frac{n-1}{N}\left(1-\mathcal L[f_\mathcal{I}](\tau)\right)[S]_0\left(s_\infty^{\frac{n-1}{n}}-1\right). \label{eq:finalsizegenimprel}
\end{align}

 Several important observations can be made. The first is around the interpretation of the Laplace transform of $f_\mathcal{I}$.
Let us consider an isolated $S$--$I$ link, and let $\mathcal{E}$ be the exponentially distributed random variable of the time of infection along this link, with parameter $\tau$. Then the probability of transmission is the same as the probability that infection occurs before recovery, that is
\begin{equation}T=P(\mathcal{E}<\mathcal{I})=\int_0^\infty F_\mathcal{E}(y) f_\mathcal{I}(y) dy=\int_0^\infty(1-e^{-\tau y}) f_\mathcal{I}(y) dy=1-\mathcal L[f_\mathcal{I}](\tau).\end{equation} 
Hence, the Laplace transform has natural interpretation and enters the calculation of the probability of transmission across an isolated $S$--$I$ link.

The intuitive derivation for $\mathcal{R}^p_0$ follows from considering the rate at which new $S$--$I$ links are created.
From~\eqref{eq:closedeqSI}, and focusing on the single positive term on the right hand side, it follows that $S$--$I$ links are created at rate $\frac{\tau(n-1)}{n}\frac{[SS]}{[S]}$ which at time 
$t=0$ and with a vanishingly small initial number of infected nodes reduces to $\tau (n-1)$. Now, multiplying this by the average lifetime of an $S$--$I$ link, which is $\frac{1-\mathcal L[f_\mathcal{I}](\tau)}{\tau}$ \cite{konyv}, gives the desired threshold value in the limit of $[S] \rightarrow N$ at $t=0$.

Notice that while $\mathcal{R}_0$ depends on the expected value only, see \eqref{eq:mfstandardR0}, 
the pairwise reproduction number \eqref{eq:genR0p} uses the complete density function, thus the average length of the infectious period itself does not determine 
exactly the reproduction number. As a consequence, for an epidemic we have to know as precisely as possible the shape of the distribution. We shall analyse how the basic reproduction number \eqref{eq:genR0p}, which is not only an epidemic threshold but also determines the final size via \eqref{eq:finalsizegenimprel}, depends on the variance of the recovery time distribution. In \cite{biomat}, using gamma, lognormal and uniform distributions we showed that within each of those distribution families,
once the mean infectious period is fixed, smaller variance in the infectious period gives a higher reproduction number and consequently a more severe epidemic. Next we  generalize this result without restricting ourselves to special distributions.

\section*{Main Result: Relationship Between the Variance and the Reproduction Number}

\medskip

In this section we give some simple conditions which may guarantee that smaller variance induces higher pairwise reproduction number. We consider a random variable $\mathcal{I}$ corresponding to recovery times with probability density functions $f_{\mathcal{I}}(t)$, cumulative distribution function $F_{\mathcal{I}}(t)=\int_{0}^{t} f_{\mathcal{I}}(s) ds$ and we shall use the integral function of the CDF $\mathcal{F}_{\mathcal{I}}(t) :=  \int_{0}^{t} F_{\mathcal{I}}(s) ds$. Clearly, $\frac{d^2}{dt^2}\mathcal{F}_{\mathcal{I}}(t)=\frac{d}{dt}F_{\mathcal{I}}(t)=f_{\mathcal{I}}(t)$. Moreover, $F_{\mathcal{I}}(0)=\mathcal{F}_{\mathcal{I}}(0)=0.$

\begin{theorem}
	\label{generalimpact}
	Consider two random variables $\mathcal{I}_1$ and $\mathcal{I}_2$ such that
	\begin{equation}
		\label{eq:E}
		\mathbb{E}(\mathcal{I}_1)=\mathbb{E}(\mathcal{I}_2)<\infty,
	\end{equation}
	and 
	\begin{equation}
		\label{eq:V}
		\mathrm{Var}(\mathcal{I}_1)<\mathrm{Var}(\mathcal{I}_2)<\infty.	
	\end{equation}
	Assume that
	\begin{equation}
		\label{eq:M3alt}
		\lim\limits_{t\rightarrow\infty} t^3 f_{\mathcal{I}_j}(t) = 0, \quad j\in \{1,2\},
	\end{equation}
	and for all $t>0$, 
	\begin{equation}
		\label{eq:mainassumption}
		\mathcal{F}_{\mathcal{I}_1}(t)\neq\mathcal{F}_{\mathcal{I}_2}(t)
	\end{equation}	
	holds. If $\mathcal{I}_1$ and $\mathcal{I}_2$ represent the recovery time distribution, then for the corresponding reproduction numbers the relation $\mathcal{R}_{0,\mathcal{I}_1}^p>\mathcal{R}_{0,\mathcal{I}_2}^p$  holds.
\end{theorem}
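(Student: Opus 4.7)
The plan is to reduce the inequality $\mathcal{R}_{0,\mathcal{I}_1}^p>\mathcal{R}_{0,\mathcal{I}_2}^p$, via the explicit formula \eqref{eq:genR0p}, to showing $\mathcal{L}[f_{\mathcal{I}_1}](\tau)<\mathcal{L}[f_{\mathcal{I}_2}](\tau)$. First I would rewrite the Laplace transform in terms of the second antiderivative $\mathcal{F}_{\mathcal{I}}$ of $f_{\mathcal{I}}$ using integration by parts twice:
\begin{equation*}
\mathcal{L}[f_{\mathcal{I}}](\tau)=\int_0^\infty e^{-\tau t}f_{\mathcal{I}}(t)\,dt=\tau\int_0^\infty e^{-\tau t}F_{\mathcal{I}}(t)\,dt=\tau^2\int_0^\infty e^{-\tau t}\mathcal{F}_{\mathcal{I}}(t)\,dt,
\end{equation*}
where the moment hypothesis \eqref{eq:M3alt} (which forces the tail $\xi(t)=1-F_{\mathcal{I}}(t)$ to decay fast enough that $t^2\xi(t)\to 0$, and $\mathcal{F}_{\mathcal{I}}(t)\sim t-\mathbb{E}(\mathcal{I})$) controls the boundary terms. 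Setting $g(t):=\mathcal{F}_{\mathcal{I}_1}(t)-\mathcal{F}_{\mathcal{I}_2}(t)$, this gives
\begin{equation*}
\mathcal{L}[f_{\mathcal{I}_1}](\tau)-\mathcal{L}[f_{\mathcal{I}_2}](\tau)=\tau^2\int_0^\infty e^{-\tau t}g(t)\,dt,
\end{equation*}
so it suffices to show $g(t)<0$ for all $t>0$.

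The second step is to extract information about $g$ from the moment hypotheses. Using $\mathcal{F}_{\mathcal{I}}(t)=t-\int_0^t\xi(s)\,ds$ together with \eqref{eq:E}, I get $g(0)=0$ and $\lim_{t\to\infty}g(t)=\mathbb{E}(\mathcal{I}_2)-\mathbb{E}(\mathcal{I}_1)=0$. Next I would compute the integral of $g$ by Fubini: since $\mathcal{F}_{\mathcal{I}}(t)+\mathbb{E}(\mathcal{I})-t=\int_t^\infty\xi(s)\,ds$,
\begin{equation*}
\int_0^\infty\bigl(\mathcal{F}_{\mathcal{I}}(t)+\mathbb{E}(\mathcal{I})-t\bigr)\,dt=\int_0^\infty s\,\xi(s)\,ds=\tfrac{1}{2}\mathbb{E}(\mathcal{I}^2),
\end{equation*}
the interchange being justified by \eqref{eq:M3alt}, which in particular guarantees the finiteness of $\mathbb{E}(\mathcal{I}^2)$. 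Subtracting the analogous identity for $\mathcal{I}_2$ and using $\mathbb{E}(\mathcal{I}_1)=\mathbb{E}(\mathcal{I}_2)$,
\begin{equation*}
\int_0^\infty g(t)\,dt=\tfrac{1}{2}\bigl(\mathbb{E}(\mathcal{I}_1^2)-\mathbb{E}(\mathcal{I}_2^2)\bigr)=\tfrac{1}{2}\bigl(\mathrm{Var}(\mathcal{I}_1)-\mathrm{Var}(\mathcal{I}_2)\bigr)<0
\end{equation*}
by \eqref{eq:V}.

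The final step is to upgrade from the negative integral to the pointwise sign of $g$. The function $g$ is continuous on $(0,\infty)$ and, by hypothesis \eqref{eq:mainassumption}, never vanishes there; by the intermediate value theorem it therefore has a constant sign on $(0,\infty)$. Combined with $\int_0^\infty g<0$, this forces $g(t)<0$ for every $t>0$, and plugging back into the formula from step one yields $\mathcal{L}[f_{\mathcal{I}_1}](\tau)<\mathcal{L}[f_{\mathcal{I}_2}](\tau)$, which is equivalent to the claim.

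The most delicate point I anticipate is the bookkeeping around the double integration by parts and the Fubini step: one needs to verify that the boundary contributions $e^{-\tau t}F_{\mathcal{I}}(t)$ and $e^{-\tau t}\mathcal{F}_{\mathcal{I}}(t)$ vanish at both endpoints and that the double integral $\int_0^\infty\!\int_t^\infty\xi(s)\,ds\,dt$ is finite. All of this follows from \eqref{eq:M3alt}, but spelling out exactly why $\lim_{t\to\infty}t^3 f_{\mathcal{I}}(t)=0$ is the right assumption (it implies $t^2\xi(t)\to 0$ and $\mathbb{E}(\mathcal{I}^2)<\infty$, which are what the proof really uses) is the part that deserves the most care.
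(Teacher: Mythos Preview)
Your proof is correct and follows essentially the same route as the paper's: reduce to comparing Laplace transforms, integrate by parts twice to reach $\tau^2\int_0^\infty e^{-\tau t}g(t)\,dt$ with $g=\mathcal{F}_{\mathcal{I}_1}-\mathcal{F}_{\mathcal{I}_2}$, use equal means and the variance inequality to show $g(t)\to 0$ as $t\to\infty$ and $\int_0^\infty g<0$, and then invoke the non-vanishing hypothesis \eqref{eq:mainassumption} to force $g<0$ pointwise. The only cosmetic difference is that you phrase the moment computations via the survival function and Fubini, whereas the paper integrates by parts on the difference $f_{\mathcal{I}_1}-f_{\mathcal{I}_2}$ directly and handles the resulting boundary terms by L'Hospital's rule together with \eqref{eq:M3alt}.
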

\begin{proof}
	Using assumption~\eqref{eq:E}, we deduce
	\begin{eqnarray*}
		\int_{0}^{\infty} t\left(f_{\mathcal{I}_1}(t)-f_{\mathcal{I}_2}(t)\right) dt&=&\left[t(F_{\mathcal{I}_1}(t)-F_{\mathcal{I}_2}(t))\right]_0^\infty-\int_{0}^{\infty}(F_{\mathcal{I}_1}(t)-F_{\mathcal{I}_2}(t)) dt \nonumber\\
		&=&\lim\limits_{t\rightarrow\infty} t(F_{\mathcal{I}_1}(t)-F_{\mathcal{I}_2}(t))-\left[\mathcal{F}_{\mathcal{I}_1}(t)-\mathcal{F}_{\mathcal{I}_2}(t)\right]_0^\infty\nonumber\\
		&\stackrel{[*]}{=}&-\lim\limits_{t\rightarrow\infty}(\mathcal{F}_{\mathcal{I}_1}(t)-\mathcal{F}_{\mathcal{I}_2}(t))=0
	\end{eqnarray*}
	thus 
	\begin{equation}
		\label{eq:Ealt}
		\lim\limits_{t\rightarrow\infty}(\mathcal{F}_{\mathcal{I}_1}(t)-\mathcal{F}_{\mathcal{I}_2}(t))=0.
	\end{equation}
	To see $[*]$, i.e. $\lim\limits_{t\rightarrow\infty} t(F_{\mathcal{I}_1}(t)-F_{\mathcal{I}_2}(t))=0$,
	we need some algebraic manipulations:
	\begin{eqnarray*}
		\lim\limits_{t\rightarrow\infty} t(F_{\mathcal{I}_1}(t)-F_{\mathcal{I}_2}(t))&=&\lim\limits_{t\rightarrow\infty} \frac{F_{\mathcal{I}_1}(t)-F_{\mathcal{I}_2}(t)}{\frac{1}{t}}\stackrel{\mathrm{L'H}}{=}\lim\limits_{t\rightarrow\infty}\frac{f_{\mathcal{I}_1}(t)-f_{\mathcal{I}_2}(t)}{-\frac{1}{t^2}}\nonumber\\
		&=&-\lim\limits_{t\rightarrow\infty} t^2(f_{\mathcal{I}_1}(t)-f_{\mathcal{I}_2}(t)\stackrel{\eqref{eq:M3alt}}{=}0,
	\end{eqnarray*}
	where L'H refers to the L'Hospital rule. From assumption~\eqref{eq:V}, we have
	\begin{eqnarray*}
		\mathrm{Var}(\mathcal{I}_1)&=&\mathbb{E}(\mathcal{I}_1^2)-(\mathbb{E}(\mathcal{I}_1))^2<\mathbb{E}(\mathcal{I}_2^2)-(\mathbb{E}(\mathcal{I}_2))^2=\mathrm{Var}(\mathcal{I}_2)\nonumber\\
		&& \stackrel{\eqref{eq:E}}{\Rightarrow}\mathbb{E}(\mathcal{I}_1^2)<\mathbb{E}(\mathcal{I}_2^2),
	\end{eqnarray*}
	or equivalently $\int_{0}^{\infty}t^2(f_{\mathcal{I}_1}-f_{\mathcal{I}_2})dt<0$. We can carry out some calculations on the left-hand side of this inequality:
	\begin{eqnarray*}
		\int_{0}^{\infty}t^2(f_{\mathcal{I}_1}-f_{\mathcal{I}_2})dt &=& [t^2 (F_{\mathcal{I}_1}(t)-F_{\mathcal{I}_2}(t))]_0^\infty - 2 \int_{0}^{\infty} t(F_{\mathcal{I}_1}(t)-F_{\mathcal{I}_2}(t)) dt\nonumber\\
		&=&\lim\limits_{t\rightarrow\infty}t^2 (F_{\mathcal{I}_1}(t)-F_{\mathcal{I}_2}(t))-2 [t(\mathcal{F}_{\mathcal{I}_1}(t)-\mathcal{F}_{\mathcal{I}_2}(t))]_0^\infty\nonumber\\
		&+& 2 \int_{0}^{\infty} \mathcal{F}_{\mathcal{I}_1}(t)-\mathcal{F}_{\mathcal{I}_2}(t)dt\nonumber\\
		&\stackrel{[**]}{=}& -2 \lim\limits_{t\rightarrow\infty} t(\mathcal{F}_{\mathcal{I}_1}(t)-\mathcal{F}_{\mathcal{I}_2}(t))+ 2 \int_{0}^{\infty} \mathcal{F}_{\mathcal{I}_1}(t)-\mathcal{F}_{\mathcal{I}_2}(t)dt\nonumber\\
		&\stackrel{[**]}{=}& 2\int_{0}^{\infty} \mathcal{F}_{\mathcal{I}_1}(t)-\mathcal{F}_{\mathcal{I}_2}(t)dt,
	\end{eqnarray*}
	consequently 
	\begin{equation}
		\label{eq:Valt}
		\int_{0}^{\infty} \mathcal{F}_{\mathcal{I}_1}(t)-\mathcal{F}_{\mathcal{I}_2}(t)dt<0.
	\end{equation}
	To prove $[**]$, i.e. $\lim\limits_{t\rightarrow\infty}t^2 (F_{\mathcal{I}_1}(t)-F_{\mathcal{I}_2}(t))=\lim\limits_{t\rightarrow\infty} t(\mathcal{F}_{\mathcal{I}_1}(t)-\mathcal{F}_{\mathcal{I}_2}(t))=0$, we have
	\begin{eqnarray*}
		\lim\limits_{t\rightarrow\infty} t(\mathcal{F}_{\mathcal{I}_1}(t)-\mathcal{F}_{\mathcal{I}_2}(t))&=&\lim\limits_{t\rightarrow\infty} \frac{\mathcal{F}_{\mathcal{I}_1}(t)-\mathcal{F}_{\mathcal{I}_2}(t)}{\frac{1}{t}}
		\stackrel{\mathrm{L'H}}{=}\lim\limits_{t\rightarrow\infty}\frac{F_{\mathcal{I}_1}(t)-F_{\mathcal{I}_2}(t)}{-\frac{1}{t^2}}\nonumber\\
		&=& -\lim\limits_{t\rightarrow\infty}t^2 (F_{\mathcal{I}_1}(t)-F_{\mathcal{I}_2}(t))\nonumber\\
		&\stackrel{\mathrm{L'H}}{=}& \lim\limits_{t\rightarrow\infty}\frac{f_{\mathcal{I}_1}(t)-f_{\mathcal{I}_2}(t)}{\frac{2}{t^3}}=\frac{1}{2}\lim\limits_{t\rightarrow\infty} t^3(f_{\mathcal{I}_1}(t)-f_{\mathcal{I}_2}(t))\nonumber\\
		&\stackrel{\eqref{eq:M3alt}}{=}& 0.
	\end{eqnarray*}	
	Since $F_{\mathcal{I}}(t)\geq0, t\geq0$ and monotone increasing, the integral function of CDF $\mathcal{F}_{\mathcal{I}}(t)$ is monotone increasing and convex. Using \eqref{eq:mainassumption} and \eqref{eq:Valt}, we obtain
	\begin{equation}
		\label{eq:dominance}
		\mathcal{F}_{\mathcal{I}_1}(t)<\mathcal{F}_{\mathcal{I}_2}(t),
	\end{equation}
	for all $t>0$. Clearly, for $\mathcal{R}_{0,\mathcal{I}_1}^p>\mathcal{R}_{0,\mathcal{I}_2}^p$, it is enough to prove, that $\mathcal{L}[f_{\mathcal{I}_1}](\tau)<\mathcal{L}[f_{\mathcal{I}_2}](\tau)$, i.e. $\int_{0}^{\infty}e^{-\tau t}(f_{\mathcal{I}_1}(t)-f_{\mathcal{I}_2}(t))dt<0.$ First, we perform some algebraic manipulation on the left-hand side:
	\begin{eqnarray*}
		\int_{0}^{\infty}e^{-\tau t}(f_{\mathcal{I}_1}(t)-f_{\mathcal{I}_2}(t))dt&=&[e^{-\tau t}(F_{\mathcal{I}_1}(t)-F_{\mathcal{I}_2}(t))]_0^\infty\nonumber\\
		&& +\tau \int_{0}^{\infty}e^{-\tau t}(F_{\mathcal{I}_1}(t)-F_{\mathcal{I}_2}(t)) dt\nonumber\\
		&=& \tau [e^{-\tau t}(\mathcal{F}_{\mathcal{I}_1}(t)-\mathcal{F}_{\mathcal{I}_2}(t))]_0^\infty\nonumber\\
		&& +\tau^2 \int_{0}^{\infty}e^{-\tau t}(\mathcal{F}_{\mathcal{I}_1}(t)-\mathcal{F}_{\mathcal{I}_2}(t)) dt\nonumber\\
		&\stackrel{\eqref{eq:Ealt}}{=}& \tau^2 \int_{0}^{\infty}e^{-\tau t}(\mathcal{F}_{\mathcal{I}_1}(t)-\mathcal{F}_{\mathcal{I}_2}(t)) dt.
	\end{eqnarray*}
	In conclusion, we have 
	\begin{equation}
		\tau^2 \int_{0}^{\infty}e^{-\tau t}(\mathcal{F}_{\mathcal{I}_1}(t)-\mathcal{F}_{\mathcal{I}_2}(t)) dt \stackrel{\eqref{eq:dominance}}{<}0,  \label{remarkhoz}
	\end{equation}
	therefore $\mathcal{L}[f_{\mathcal{I}_1}](\tau)<\mathcal{L}[f_{\mathcal{I}_2}](\tau)$, which gives $\mathcal{R}_{0,\mathcal{I}_1}^p>\mathcal{R}_{0,\mathcal{I}_2}^p$.
\end{proof}

\begin{remark} While one can easily construct a specific example for which the technical condition \eqref{eq:M3alt} does not hold, it is satisfied by all epidemiologically meaningful distributions, since extremely long infectious periods do not occur in epidemics. It trivially holds for distributions with compact support, and even for power law distributions with finite variance.
	\end{remark}

\begin{corollary}
\label{cor:fsize}
Assume that the conditions of Theorem 1 hold. Then the infectious period distribution with smaller variance induces a larger epidemic outbreak. 
\end{corollary}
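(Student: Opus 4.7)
The plan is to combine Theorem~\ref{generalimpact} with a monotonicity analysis of the implicit final size relation \eqref{eq:finalsizegenimprel}. By Theorem~\ref{generalimpact}, the recovery time distribution with the smaller variance produces a strictly larger pairwise reproduction number $\mathcal{R}_0^p$. What remains is to show that $s_\infty$ is a strictly decreasing function of $\mathcal{R}_0^p$ along the non-trivial branch of \eqref{eq:finalsizegenimprel}, because then a larger $\mathcal{R}_0^p$ yields a smaller escape fraction and hence a strictly larger total number infected $[S]_0(1-s_\infty)$, which is exactly the meaning of a larger epidemic outbreak.

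The first step is to rewrite \eqref{eq:finalsizegenimprel} as $\mathcal{R}_0^p = \psi(s_\infty)$ with
\[
\psi(s) := \frac{(n-1)\bigl(s^{1/n}-1\bigr)}{s^{(n-1)/n}-1}, \qquad s\in(0,1).
\]
Both numerator and denominator are negative on $(0,1)$, so $\psi$ is positive. To establish monotonicity I would substitute $u=s^{1/n}$ and factor the denominator as a geometric sum,
\[
s^{(n-1)/n}-1 = u^{n-1}-1 = (u-1)\sum_{k=0}^{n-2} u^k,
\]
which after cancellation gives the clean expression
\[
\psi(s) = \frac{n-1}{\sum_{k=0}^{n-2} u^k}.
\]
Since $\sum_{k=0}^{n-2}u^k$ is strictly increasing in $u\in(0,1)$ and $u=s^{1/n}$ is strictly increasing in $s$, the function $\psi$ is strictly decreasing on $(0,1)$, mapping this interval bijectively onto $(1,n-1)$. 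The existence and uniqueness of the non-trivial root of \eqref{eq:finalsizegenimprel} for each supercritical $\mathcal{R}_0^p\in(1,n-1)$ follows immediately.

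The third step is just bookkeeping: applying Theorem~\ref{generalimpact} yields $\mathcal{R}_{0,\mathcal{I}_1}^p > \mathcal{R}_{0,\mathcal{I}_2}^p$, and then the strictly decreasing bijection $\psi^{-1}$ gives $s_\infty^{(1)} < s_\infty^{(2)}$, so $[S]_0\bigl(1-s_\infty^{(1)}\bigr) > [S]_0\bigl(1-s_\infty^{(2)}\bigr)$.

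The proof is essentially elementary once the geometric-series factorization is spotted; the only mild obstacle is the boundary behaviour. If either $\mathcal{R}_0^p$ lies at or below the threshold $1$, there is no outbreak at all and the comparison degenerates, so implicitly the corollary should be read in the supercritical regime $\mathcal{R}_{0,\mathcal{I}_2}^p>1$. An analogous care is needed at the upper end $\mathcal{R}_0^p = n-1$, corresponding to $s_\infty \to 0$, which cannot be exceeded within the pairwise framework; within the physically meaningful window $(1,n-1)$, however, the monotonicity argument applies without caveat.
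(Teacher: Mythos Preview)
Your argument is correct and is essentially the paper's own proof: the paper also sets $z=s_\infty^{1/n}$, factors $z^{n-1}-1=(z-1)\sum_{k=0}^{n-2}z^k$ to obtain $(n-1)=\mathcal{R}_0^p\sum_{k=0}^{n-2}z^k$, and reads off that larger $\mathcal{R}_0^p$ forces smaller $z$ and hence smaller $s_\infty$. Your additional remarks on the range $(1,n-1)$ and the boundary cases are a welcome clarification but do not change the route.
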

\begin{proof}
Let $z=s_\infty^{\frac{1}{n}}$. Then, \eqref{eq:finalsizegenimprel} can be written as
\begin{align}
(z-1)(n-1)&=\mathcal{R}^p_0\left(z^{n-1}-1\right), 
\end{align}
which, since we are interested in the root $z\in (0,1)$, simplifies to
\begin{align}
(n-1)&=\mathcal{R}^p_0\left(z^{n-2}+\dots+z+1\right). 
\end{align}
Clearly larger $\mathcal{R}^p_0$ results in smaller $z$, that means smaller $s_\infty$ thus larger epidemic. Combining this with Theorem 1 yields the result. 

\end{proof}

\section*{Relation to Stochastic Ordering and the work of Wilkinson and Sharkey}

\medskip

In a very recent work \cite{wilkinson2}, Wilkinson and Sharkey considered a general class of network based stochastic epidemic models, and proved a
monotonic relationship between the variability of the infectious period and the probability that the infection will spread to an arbitrary subset of the population by time $t$. Below we show that, while the work \cite{wilkinson2} was done in a different context, the main conclusion is very similar to our main result.
In \cite{wilkinson}, the variability was represented by the convex order of the distributions of infectious periods. Given two random variables $\mathcal{I}_1$ and $\mathcal{I}_2$ whose expectations exist, such that
	\begin{equation}
\mathbb{E}(\phi(\mathcal{I}_1)) \leq \mathbb{E}(\phi(\mathcal{I}_2)) \hbox{\quad  for all convex functions \quad } \phi: \mathbb{R} \to \mathbb{R},
\end{equation}
$\mathcal{I}_1$ is said to be smaller than $\mathcal{I}_2$ in the convex order, denoted by $\mathcal{I}_1 \leq_{cx} \mathcal{I}_2$, see monograph \cite{order} for a comprehensive description of various stochastic orders, their properties and relations.

\begin{theorem}
Assume that $\mathcal{I}_1 \leq_{cx} \mathcal{I}_2$ , and the technical condition \eqref{eq:M3alt} holds. Then, $\mathcal{R}_{0,\mathcal{I}_1}^p>\mathcal{R}_{0,\mathcal{I}_2}^p$  holds.
\end{theorem}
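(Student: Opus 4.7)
The plan is to deduce Theorem 2 from Theorem 1 by extracting the relevant consequences of the convex order. Since $\leq_{cx}$ already encodes equal expectations (by applying the definition to $\phi(x)=\pm x$) and a non-strict variance ordering (via $\phi(x)=x^2$ combined with equal means), what remains is to recover the CDF integral dominance~\eqref{eq:dominance} that drove the Laplace transform comparison in the previous proof.

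The key step I would carry out first is the following. For each fixed $a\geq 0$, the function $\phi_a(x)=(a-x)_+$ is convex and bounded on the range of $\mathcal{I}_j$, so $\mathcal{I}_1 \leq_{cx} \mathcal{I}_2$ yields $\mathbb{E}[(a-\mathcal{I}_1)_+]\leq \mathbb{E}[(a-\mathcal{I}_2)_+]$. A one-line integration by parts gives $\mathbb{E}[(a-\mathcal{I})_+]=\int_0^a F_\mathcal{I}(s)\,ds=\mathcal{F}_\mathcal{I}(a)$, so $\mathcal{F}_{\mathcal{I}_1}(a) \leq \mathcal{F}_{\mathcal{I}_2}(a)$ for all $a\geq 0$, matching the dominance~\eqref{eq:dominance} used in Theorem 1.

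With the dominance in hand, I would invoke the technical condition \eqref{eq:M3alt} exactly as in Theorem 1 to discharge the boundary terms in the integration-by-parts chain that produced \eqref{remarkhoz}, giving
\[
\mathcal{L}[f_{\mathcal{I}_1}](\tau)-\mathcal{L}[f_{\mathcal{I}_2}](\tau)=\tau^2\int_0^\infty e^{-\tau t}\bigl(\mathcal{F}_{\mathcal{I}_1}(t)-\mathcal{F}_{\mathcal{I}_2}(t)\bigr)\,dt \leq 0,
\]
and the desired ordering of the pairwise reproduction numbers then follows via \eqref{eq:genR0p}.

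The main obstacle is the strictness asserted in the statement. Convex order as usually defined permits $\mathcal{I}_1 =_d \mathcal{I}_2$, in which case every step above degenerates to equality; the theorem therefore implicitly presumes the two distributions are genuinely distinct. Under that distinctness, $\mathcal{F}_{\mathcal{I}_1}$ must differ from $\mathcal{F}_{\mathcal{I}_2}$ on a set of positive Lebesgue measure (since the second distributional derivative of $\mathcal{F}_\mathcal{I}$ recovers $f_\mathcal{I}$), and the strict positivity of the kernel $e^{-\tau t}$ forces the integral above to be strictly negative, yielding $\mathcal{R}^p_{0,\mathcal{I}_1}>\mathcal{R}^p_{0,\mathcal{I}_2}$. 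A cleaner shortcut is to apply the convex-order definition directly to the strictly convex map $x\mapsto e^{-\tau x}$, which instantly delivers the comparison of Laplace transforms; that route, however, does not employ \eqref{eq:M3alt}, so the reduction to Theorem 1 via the CDF dominance is the more faithful approach to the paper's framework.
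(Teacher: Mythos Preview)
Your proposal is correct and follows essentially the same route as the paper: extract \eqref{eq:E}, \eqref{eq:V}, and the dominance $\mathcal{F}_{\mathcal{I}_1}\leq\mathcal{F}_{\mathcal{I}_2}$ from the convex order (the paper obtains the latter via $\phi_a(x)=(x-a)_+$ and a citation to Theorem~3.A.1 of \cite{order}, while you use $\phi_a(x)=(a-x)_+$ and a direct integration by parts), then feed these into the computation leading to \eqref{remarkhoz}. Your discussion of strictness is in fact more careful than the paper's, which simply asserts that \eqref{eq:V} is strict and hence the two $\mathcal{F}$'s are not identical; your observation that the convex order alone allows $\mathcal{I}_1=_d\mathcal{I}_2$ and that distinctness must be assumed is well taken.
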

\begin{proof}
From the convexity of $\phi(x)=x$ and $\phi(x)=-x$, \eqref{eq:E} follows, and the convexity of $\phi(x)=x^2$ yields \eqref{eq:V}. From the convexity of $\phi_a(x)=(x-a)_+$, Theorem 3.A.1 in \cite{order} deduced that $\mathcal{I}_1 \leq_{cx} \mathcal{I}_2$ if and only if $\mathcal{F}_{\mathcal{I}_1}(t)\leq\mathcal{F}_{\mathcal{I}_2}(t)$ for all $t>0$.  Now instead of the strict inequality of \eqref{eq:dominance}, we have less or equal, but from \eqref{eq:V} the two functions are not identical, 
hence analogously to the proof of Theorem 1 we can conclude \eqref{remarkhoz}, which completes the proof.
\end{proof}

\begin{remark}
	\textbf{One can deduce Theorems 1 and 2 using \cite{wilkinson2}.} In \cite{wilkinson2}, the authors found the monotonic relationship between the variability of infectious periods and the final epidemic size, in a more general context of stochastic epidemics, that includes the pairwise models, by the means of convex ordering. According to Theorem 3.A.1b in \cite{order}, our condition \eqref{eq:mainassumption} implies that the two distributions considered in Theorem 1 are convex ordered, and then the main conclusion of Theorem 1 follows from combining \cite{wilkinson2} with the argument of Corollary 1 (monotonicity relationship between the pairwise reproduction number and the final epidemic size) . This also shows that Theorem 2 can be derived even without the technical condition, via \cite{wilkinson2}. 
\end{remark}

\begin{remark} \textbf{An example when \cite{wilkinson2} can not be applied but our methodology works.}

Let $\mathcal{I}_1 \sim \mathrm{Exp}(1)$ (exponential distribution with parameter $1$). Then $f_{\mathcal{I}_1}(t)=e^{-t}$, $F_{\mathcal{I}_1}(t)=1-e^{-t}$, $\mathcal{F}_{\mathcal{I}_1}(t)=t-1+e^{-t}$, $\mathbb{E}(\mathcal{I}_1)=1$, $\mathrm{Var}(\mathcal{I}_1)=1$, and $\mathcal{L}[f_{\mathcal{I}_1}](\tau)=1/(1+\tau)$.

Let $\mathcal{I}_2$ be the discrete random variable that takes the value $1-u>0$ with probability $0.5$, and the value $1+u$ with probability $0.5$, where $0<u<1$. Then, we have $\mathbb{E}(\mathcal{I}_2)=1$, $F_{\mathcal{I}_2}(t)=0$ for $t<1-u$, $0.5$ for $1-u \leq t <1+u$ and $1$ for $1+u \leq t$. Furthermore, $\mathrm{Var}(\mathcal{I}_2)=u^2<1$, $\mathcal{F}_{\mathcal{I}_2}(t)=0.5(t-(1-u))$ on $[1-u,1+u]$, and $\mathcal{L}[f_{\mathcal{I}_2}](\tau)=0.5( e^{-\tau (1-u)}+ e^{-\tau (1+u)})$.

Then, for $0<t<1-u$ we have $\mathcal{F}_{\mathcal{I}_1}(t)>0=\mathcal{F}_{\mathcal{I}_2}(t)$. 
However, at $t=1$, we have $\mathcal{F}_{\mathcal{I}_1}(1)= e^{-1} <0.5 u =\mathcal{F}_{\mathcal{I}_2}(1),$
whenever $u>2 e^{-1} \approx 0.736$. In light of Theorem 3.A.1b in \cite{order}, in this case the random variables $\mathcal{I}_1,\mathcal{I}_2$ are not convex ordered, thus \cite{wilkinson2} does not apply.

For sufficiently large $\tau$, we have $\mathcal{L}[f_{\mathcal{I}_1}](\tau)>\mathcal{L}[f_{\mathcal{I}_2}](\tau)$, hence  $\mathcal{R}_{0,\mathcal{I}_1}^p<\mathcal{R}_{0,\mathcal{I}_2}^p$, and the discrete random variable, which has the smaller variance, generates a larger epidemic outbreak. The pairwise reproduction number approach can be applied even in situations that are not covered by the convex order approach, as this simple example illustrates.
\end{remark}

\section*{Implications for Heterogeneous Degree Distributions}

\medskip

In a Configuration-Model network, given a random $S$--$I$ link, we expect the susceptible individual to have degree $k$ with probability proportional to $k[S_k]$ where $[S_k]$ is the number of susceptible individuals with degree $k$.

Repeating our earlier derivation of equation~\eqref{eq:genR0p} for $\mathcal{R}_0^p$ in the homogeneous network case, we anticipate that for fixed duration $\sigma$,
\begin{equation}
\mathcal{R}_0^p = \sum (k-1) (1-e^{-\tau \sigma}) \frac{k [S_k](0)}{\ave{k}N},
\end{equation}
where $\ave{k}$ is the average degree.

Extending this to the case of heterogeneous infection duration, we find
\begin{equation}
\mathcal{R}_0^p = (1 -\mathcal{L}[f_{\mathcal{I}}](\tau)) \frac{\sum_k k [S_k](0)}{N\ave{k}}.
\label{eq:hetR0p}
\end{equation}

It can be shown~\cite{Perc,kenah:second,newman2002spread} that the final number of degree $k$ individuals infected is given by 
\begin{equation}
[S_k]_\infty = [S_k]_0 \theta_\infty^k
\label{eq:Skinf}
\end{equation}
where the following implicit relation holds:
\begin{equation}
\theta_\infty = \mathcal{L}[f_{\mathcal{I}}](\tau) + (1-\mathcal{L}[f_{\mathcal{I}}](\tau)) \frac{\sum_k [S_k]_0\theta_\infty^{k-1}}{N\ave{k}}
\label{eq:Thetainf}
\end{equation}
Here $\theta_\infty$ is a per-edge measure of the probability of \emph{not} being infected.  So an initially susceptible individual with degree $k$ remains susceptible with probability $\theta_\infty^k$.  The role of $\theta_\infty$ is the same as $s_\infty^{1/n}$ in Eq.~\eqref{eq:homogeneous_final}.

Note that in $\mathcal{R}_0^p$, the terms capturing the distribution of infection durations separate from the terms capturing the distribution of degrees.  The ordering of $\mathcal{R}_0^p$ as the infection duration distribution changes is independent of the degree distribution.  So the ordering of $\mathcal{R}_0^p$ is the same as found in the regular networks.  The final size depends monotonically on the Laplace transform of $f_{\mathcal{I}}$, and so the results about the ordering of final sizes in regular networks carry over to heterogeneous networks as well.

\section*{Discussion}

\medskip

The role of the shape of the distribution of infectious periods in disease spread has been in the interest of modellers for some time \cite{wallinga}. Our previous works already indicated that for pairwise models of network epidemic, not only the mean,
but higher order properties of the distribution of the recovery times have an impact on the outcome of the epidemic. We derived useful threshold quantities for non-Markovian recovery in \cite{prl}. In \cite{biomat}, we showed that for particular distribution families (typically two parameter families such as gamma, lognormal, and uniform distribution), smaller variance leads to higher reproduction number within the same family when the mean is fixed. Our new result in this study allows us to make comparisons between distributions of different kinds. To show the usefulness of Theorem 1,    
as an example, we consider $\mathcal{I}_1 \sim \mathrm{Exp}(\gamma)$ and $\mathcal{I}_2 \sim \mathrm{Fixed}\left(\frac{1}{\gamma}\right)$, i.e. $f_{\mathcal{I}_1}(t)=\gamma e^{-\gamma t}, t\geq 0$ and $f_{\mathcal{I}_2}(t)=\delta\left(t-\frac{1}{\gamma}\right)$, where $\delta(t)$ denotes the Dirac delta function. Clearly, we obtain $\mathcal{F}_{\mathcal{I}_1}(t)=t+\frac{1}{\gamma}e^{-\gamma t}-\frac{1}{\gamma}$ and $\mathcal{F}_{\mathcal{I}_2}(t)=(t-\frac{1}{\gamma})_+$, thus there is no $t_0>0$, such that $\mathcal{F}_{\mathcal{I}_1}(t_0)=\mathcal{F}_{\mathcal{I}_2}(t_0)$. Since $\mathbb{E}(\mathcal{I}_1)=\mathbb{E}(\mathcal{I}_2)=\frac{1}{\gamma}$, $\frac{1}{\gamma^2}=\mathrm{Var}(\mathcal{I}_1)>\mathrm{Var}(\mathcal{I}_2)=0$ and the other conditions of Theorem \ref{generalimpact} are satisfied, we find $\mathcal{R}_{0,\mathcal{I}_1}^p<\mathcal{R}_{0,\mathcal{I}_2}^p$. 

\begin{table}[h]
 \begin{tabular}{||c c c c||} 
 \hline
 Distribution & Parameters & Mean & Variance \\ [0.5ex] 
 \hline\hline
 Fixed & 3/2 & 3/2 & 0 \\ 
 \hline
 Uniform & U(1,2) & 3/2 &  1/12=0.08(3)\\
 \hline
 Gamma &  \text{scale} =0.5, \ \text{shape} = 3
 & 3/2 &  0.75\\  
 \hline
 Exponential & 2/3 & 3/2 & 9/4=2.25 \\
 \hline
 Lognormal & $\sigma = 1$,  \ $\mu=\ln(3/2)-1/2$ & 3/2 & 3.866  \\
 \hline
 Weibull & \text{scale} = 1, \ \text{shape} = 0.6014 & 3/2 & 6.914 \\ 
 \hline
\end{tabular}
\vspace{0.5cm}
\caption{Details of all the distributions of the infection times used for the explicit stochastic network simulations.}\label{tab:distparammeanvar}
\end{table}

We have carried out extensive numerical simulations to test the final epidemic size formula \eqref{eq:finalsizegenimprel}, with $\mathcal{R}_0^p$ taken from \eqref{eq:hetR0p},  
for Fixed, Uniform, Gamma, Exponential, Log-normal, Weibull distributed infection times on regular (see Fig.~\ref{fig:regularsize}, Fig.~\ref{fig:regularevol}), Erd\H{o}s-R\'enyi (see Fig.~\ref{fig:ERsize}, Fig.~\ref{fig:ERevol}) 
and truncated scale-free (see Fig.~\ref{fig:PLsize}, Fig.~\ref{fig:PLevol}) networks. It is worth noting that the same final size relation can be obtained by combining equations \eqref{eq:Skinf}, 
\eqref{eq:Thetainf} and that for $\mathcal{R}_0^p$ for the heterogenous degree distributions. The agreement between the analytical final epidemic size and explicit 
stochastic network simulations is excellent for all distributions and networks. The parameters, mean and variance of the distributions are given in Table~\ref{tab:distparammeanvar}.

Several observations can be made. In Figs.~\ref{fig:regularsize}, \ref{fig:ERsize} and \ref{fig:PLsize} one can note that the epidemic threshold depends heavily on the distribution 
of the infectious period. While all distributions have the same mean, they differ in terms of their variance. In fact, the variance of the distributions are ordered as shown in Table~\ref{tab:distparammeanvar}.
Based on Theorem~\ref{generalimpact} and Corollary~\ref{cor:fsize} we know that for a fixed transmission rate $\tau$ and for infectious period distributions with the same mean, 
the distribution with the higher variance will lead to a smaller $\mathcal{R}_0^p$ and hence smaller attack rate. This confirms that the ordering of the variances in 
Table \ref{tab:distparammeanvar} is reflected accurately in all attack rate versus $\tau$ plots. Moreover, the insets in Figs.~\ref{fig:regularsize}, \ref{fig:ERsize} and 
\ref{fig:PLsize} 
shows that the final epidemic size relation in terms of $\mathcal{R}_0^p$ is universal, independently of how the infectious periods are distributed.
For the truncated scale-free networks in Fig.~\ref{fig:PLsize}, the attack rate behaves differently but the general analytical final epidemic size relation remains extremely 
accurate. Obviously high degree heterogeneity leads to large variance and this makes the value of $\mathcal{R}_0^p$ to be large and above threshold even for small values of 
$\tau$.

\section*{Conclusion}

\medskip

Figures~\ref{fig:regularevol}, \ref{fig:ERevol} and \ref{fig:PLevol} show the initial growth of the epidemic. The relation between variance and attack rate seems to  
translate into a straightforward association between variance and initial growth rate. Namely, distributions with higher variance leads to slower initial growth. This is not always 
the case since $\mathcal{R}_0^p$ is a generation rather than time based measure. However, here the mean of the distributions and the transmission rates are identical and thus 
the ordering seems to carry through.

We can offer an intuitive explanation of our result.  A key factor determining how many infections occur is the proportion of SI edges that eventually transmit.  If we have $M$ edges where $M$ is large, with an average infection duration $D$ and transmission rate $\tau$, then the expected number of transmission events to occur is $D M \tau$, but only the first transmission event per edge has any impact.  Those edges in which the infection duration is longer will tend to have more transmission events, while those with shorter duration are more likely to have no transmission events.  Increasing the variance in duration tends to increase the concentration of the transmission events into a smaller set of edges, resulting in fewer successful transmissions, and conversely, decreasing the variance redistributes some transmission events from edges which have already transmitted to those edges which have not.

As next steps one could consider the extension of $\mathcal{R}_0^p$ and the final size formula for epidemics where both the infection and transmission processes are non-Markovian. Such results already exist \cite{Neil} but there an EBCM was used. It would also be appropriate to explore the applicability of this newly introduced pairwise reproduction number given that it lent itself to derive a number of analytical results and it fits with the network and contact concepts. In particular one would explore how could this be measured in practice and how does its value translate into control measures.

\section*{Abbreviations}

\quad\!\!\! CDF: cumulative distribution function

EBCM: edge-based compartmental model

ODE: ordinary differential equation

SIR: susceptible-infected-removed



\begin{backmatter}

\section*{DECLARATIONS}
  
\medskip

\section*{Author's contributions}

The study was conceived by GR. The main result and its proof was found by ZsV. Network epidemic methodology was provided by IZK.
Numerical studies were done by JCM. All authors contrituted to the writing of the manuscript.

\section*{Acknowledgements}
  
  This article is a significantly extended version of R\"ost G., Kiss I. Z., Vizi Z., Variance of Infectious Periods and Reproduction Numbers for Network Epidemics with Non-Markovian Recovery, Progress in Industrial Mathematics at ECMI 2016. 
  
\section*{Funding}
 
  ZsV was supported by the EU-funded Hungarian grant EFOP-3.6.2-16-2017-00015. GR was supported by Hungarian National Research Fund Grant NKFI FK 124016 and MSCA-IF
  748193. JCM was supported by Global Good.

\section*{Availability of Data and Materials} 

Not applicable. The article contains no data.

\section*{Competing Interests} 

None of the authors have any competing interest.



\begin{thebibliography}{References}
	
	\bibitem{Ball} Ball, F., Sirl, D., and Trapman, P.: Analysis of a stochastic SIR epidemic on a random network incorporating household structure. Mathematical Biosciences, {\bf 224(2)}, 53--73 (2010)
	
	\bibitem{barbarossa} Barbarossa MV, Dénes A, Kiss G, Nakata Y, Röst G, Vizi Z: Transmission Dynamics and Final Epidemic Size of Ebola Virus Disease Outbreaks with Varying Interventions. PLoS ONE {\bf 10(7)}, e0131398 (2015)
	
	\bibitem{saldana} Carlos, L., Juher, D., and  Salda\~na, J.: On the early epidemic dynamics for pairwise models. J. Theor. Biol. {\bf 352}, 71--81 (2014) 
	
	\bibitem{diekmann} Diekmann, O., Heesterbeek, J.A.P., and Metz,  J.A.J.: On the definition and the computation of the basic reproduction ratio $R_0$ in models for infectious diseases in heterogeneous populations.  J. Math. Biol. {\bf 28} (4), 365--382 (1990)
	
	\bibitem{Decreusefond} Decreusefond, L., Dhersin, J.-S.,  Moyal, P., and V. C. Tran: Large graph limit for an SIR process in random network with heterogeneous 	connectivity. Ann. Appl. Probab., {\bf 22(2)}, 541--575 (2012)
	
	\bibitem{Janson} Janson, S. , Luczak, M., and Windridge, P.: Law of large numbers for the SIR epidemic on
	a random graph with given degrees. Random Struct. Alg., {\bf 45}, 724--761 (2014)
	
	\bibitem{MP} Karrer, B., and Newman, M. E.:  Message passing approach for general epidemic models. Physical Review E, {\bf 82(1)}, 016101 (2010).
	
	\bibitem{keeling} Keeling, M.J.: The effects of local spatial structure on
	epidemiological invasions. {Proc. R. Soc. Lond. B} {\bf 266}, 859--867 (1999) 
	
    \bibitem{kenah:second} Kenah, E., Robins, J.M.: Second look at the sprad of epidemics on networks.  {Phys. Rev. E}, {\bf 76} (3) 036113 (2007)
    
	\bibitem{dia} Knipl, D., R\"ost, G.:
	Large number of endemic equilibria for disease transmission models in patchy environment.
	Math. Biosci. {\bf 258}, 201--222 (2014)
	
	\bibitem{konyv} Kiss, I.Z., Miller, J.C., and Simon, L.P.: Mathematics of Epidemics on Networks --
	From Exact to Approximate Models. Springer (2017)
	
	\bibitem{prl} Kiss, I.Z., R\"ost, G., and Vizi, Zs.: Generalization of pairwise models to non-
	Markovian epidemics on networks. {Phys. Rev. Lett.}, {\bf 115} (7) 078701 (2015)
	
    \bibitem{ma} Ma, J.J., Earn, D.J.D.: Generality of the final size formulat for an epidemic of a newly invading infectious disease.  {Bull. Math. Biol.}, {\bf 68}, (3) 679--702 (2006)
    
    \bibitem{miller:final} Miller, J.C.: A note on the derivation of epidemic final sizes.  Bull. Math. Biol. {\bf 74}, (9), 2125--2141 (2012)
    
	\bibitem{Perc} Miller, J.C.: Epidemic size and probability in populations with heterogeneous infectivity and susceptibility. Physical Review E, {\bf 76(1)}, 010101 (2007).
	
	\bibitem{MLA} Miller, J. C., and Kiss, I. Z.: Epidemic spread in networks: Existing methods and current challenges. Mathematical Modelling of Natural Phenomena {\bf 9(2)},  4--42 (2014)
	
	\bibitem{EBCM} Miller, J.C., Slim, A., and Volz, E.M.: Edge-based compartmental modelling for infectious
	disease spread. J. Royal Soc. Interface, {\bf 9}  (70): 890--906 (2012)
	
    
	\bibitem{yuki} Nakata Y., R\"ost G.:
	Global analysis for spread of infectious diseases via transportation networks.
	J. Math. Biol.,  {\bf 70} (6), 1411--1456 (2015) 
	
	\bibitem{newman2002spread}
	Newman, M.E.J.: Spread of epidemic disease on networks.
	\newblock {Phys. Rev. E}, {\bf 66} (1) 016128 (2002)		
	
	\bibitem{pastor2014epidemic}
	Pastor-Satorras, R., Castellano, C., Van~Mieghem, P., and Vespignani, A.:
	Epidemic processes in complex networks.
	Rev. Mod. Phys. {\bf 87} (3) 925 (2015)
	
	\bibitem{biomat} R\"ost, G., Vizi, Zs., and Kiss, I. Z.: Impact of non-Markovian recovery on network epidemics, in: Mondaini RP (ed.) BIOMAT 2015, World Scientific, 40--53 (2016)
	
	\bibitem{proca} R\"ost, G., Vizi, Zs., and Kiss, I. Z.: Pairwise approximation for SIR type network epidemics with non-Markovian recovery. arXiv:1605.02933 (2016)
	
	\bibitem{order} Shaked, M., Shanthikumar, J.G.: Stochastic orders. Springer Science \& Business Media (2007).
	
	
	\bibitem{Neil} Sherborne, N., Miller, J. C., Blyuss, K. B., and Kiss, I. Z.: Mean-field models for non-Markovian epidemics on networks. Journal of Mathematical Biology, 1--24 (2017).
	
	\bibitem{wallinga} Wallinga, J., Lipsitch, M.: How generation intervals shape the relationship between growth rates and reproductive numbers. Proc. Royal Soc. B, {\bf 274(1609)}, 599--604  (2007).
	
	
	\bibitem{wilkinson} Wilkinson, R.R., Ball, F.G., and Sharkey, K.J.: The relationships between message passing, pairwise, Kermack-McKendrick and stochastic SIR epidemic models. J. Math. Biol. {\bf 75(6-–7)}, 1563–1590 (2017).
	
	\bibitem{wilkinson2} Wilkinson, R.R., Sharkey, K.J.: The impact of the infectious period on epidemics.  	Phys. Rev. E  {\bf 97}, 052403 (2018).
	
	

	
\end{thebibliography}

\newpage

\begin{figure}[h!] 
	\includegraphics[width=0.8\textwidth]{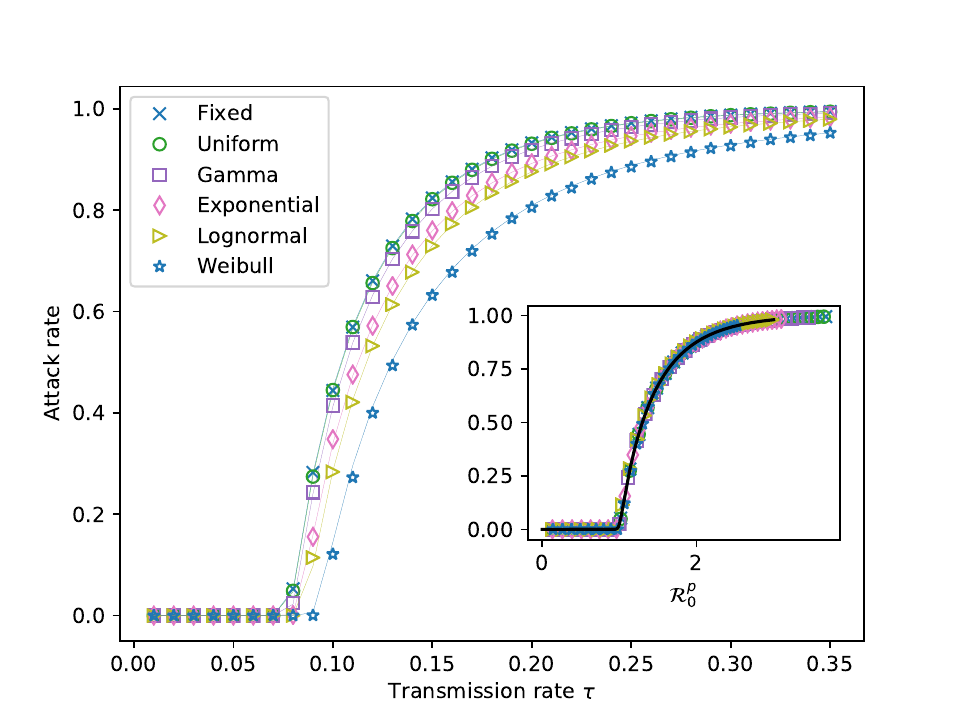}
	\caption{\csentence{Epidemic sizes in a regular network.}  We consider the outbreak sizes in a random network with $10^6$ nodes all having degree $k=10$.  We take distributions of infection duration having mean $3/2$ and plot the final proportion infected given different transmission rates $\tau$.  The inset shows that all final epidemic sizes collapse on a universal curve when using $\mathcal{R}_0^p$ as the horizontal axis. The parameters, mean and variance of the distributions are given in Table~\ref{tab:distparammeanvar}.
 }
    \label{fig:regularsize}
\end{figure}

\begin{figure}[h!] 
\includegraphics[width=0.9\textwidth]{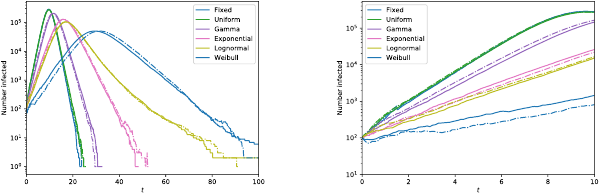}
	\caption{\csentence{Dynamics of epidemics in regular network.}
		Taking $\tau = 0.15$, we show two epidemic curves for each distribution from Fig.~\ref{fig:regularsize}.  On the right we see the early dynamics.  }
		\label{fig:regularevol}
\end{figure}

\begin{figure}[h!] 
	\includegraphics[width=0.8\textwidth]{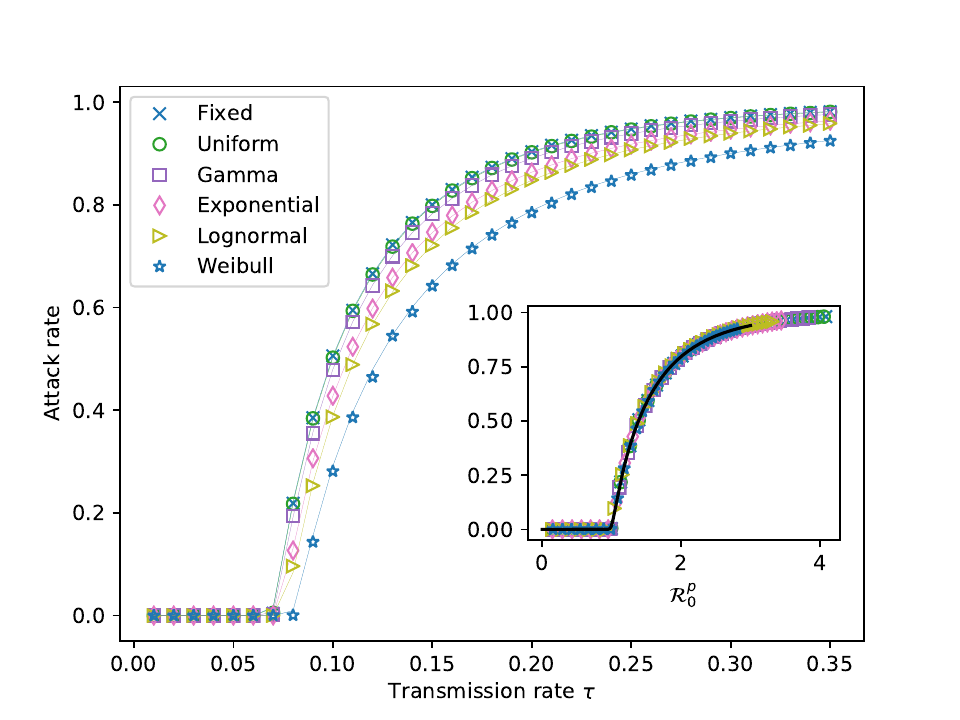}
    	\caption{\csentence{Epidemic sizes in an Erd\H{o}s--R\'{e}nyi network.}
		We look at epidemics in an Erd\H{o}s--R\'{e}nyi network with $10^6$ nodes and average degree $10$.  The results are similar to Fig.~\ref{fig:regularsize}.  Using the heterogeneous $\mathcal{R}_0^p$ all curves collapse on a universal curve. The parameters, mean and variance of the distributions are given in Table~\ref{tab:distparammeanvar}.}
\label{fig:ERsize}
\end{figure}

\begin{figure}[h!] 
\includegraphics[width=0.9\textwidth]{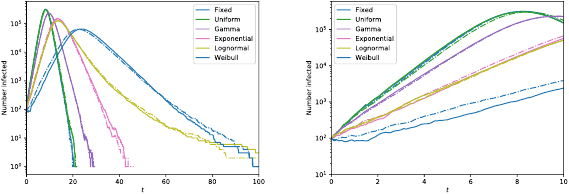}
	\caption{\csentence{Dynamics of epidemics in Erd\H{o}s--R\'{e}nyi network.}
		Taking $\tau = 0.15$, we show two epidemic curves for each distribution from Fig.~\ref{fig:ERsize}.  On the right we see the early dynamics.  }
		\label{fig:ERevol}
\end{figure}

\begin{figure}[h!] 
	\includegraphics[width=0.8\textwidth]{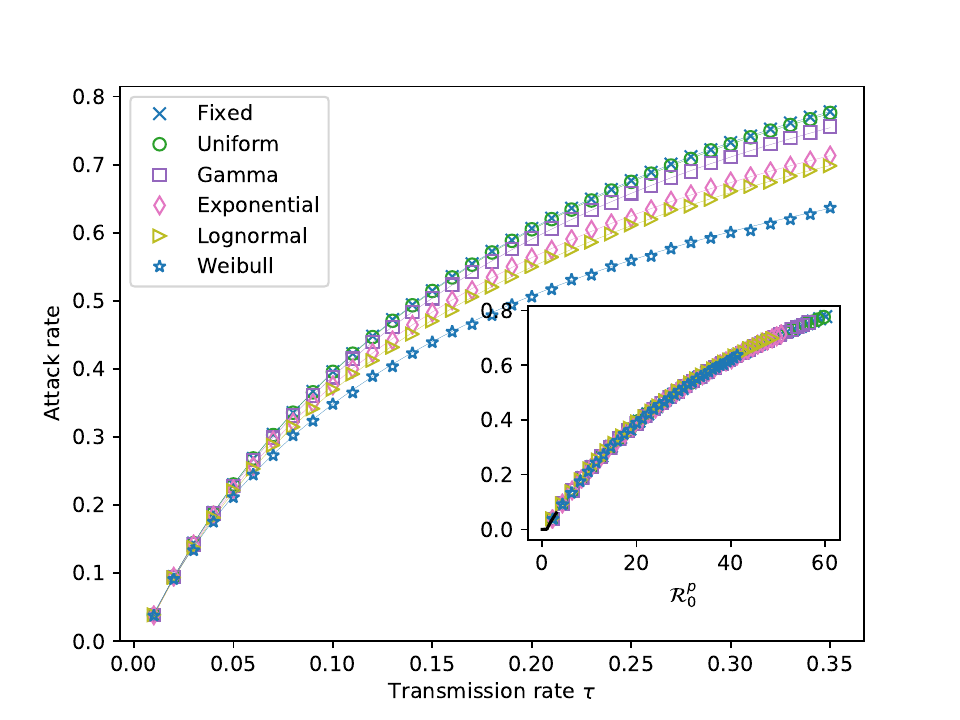}
    	\caption{\csentence{Epidemic sizes in a scalefree network.}
		We look at epidemics in a truncated scale free network with $10^6$ nodes having minimum degree $2$ and maximum degree $954$ and each degree $k$ assigned with probability proportional to $k^{-2}$.  This yields an average degree of approximately $10$.  Epidemics exist even at very small $\tau$, and $\mathcal{R}_0^p$ is significantly larger than in the other networks. Using the heterogeneous $\mathcal{R}_0^p$ all curves collapse on a universal curve. The parameters, mean and variance of the distributions are given in Table~\ref{tab:distparammeanvar}.}
\label{fig:PLsize}
\end{figure}

\begin{figure}[h!] 
	\includegraphics[width=0.9\textwidth]{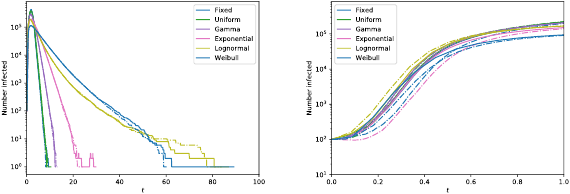}
	\caption{\csentence{Dynamics of epidemics in scalefree network.}
		Taking $\tau = 0.15$, we show two epidemic curves for each distribution from Fig.~\ref{fig:PLsize}.  On the right we see the early dynamics.  There is significant heterogeneity in the early growth, even for the same distribution.  This is because the timing of rare infections to the highest degree nodes plays a significant role even in networks of $10^6$ nodes.}
\label{fig:PLevol}		
\end{figure}




\end{backmatter}
\end{document}